\newtheorem{thm}{Theorem}
\newtheorem{lem}[thm]{Lemma}
\newtheorem{prop}[thm]{Proposition}
\newtheorem{cor}[thm]{Corollary}
\newcommand{\Z}{{\mathbb Z}} 
\newcommand{\R}{{\mathbb R}} 
\newcommand{\T}{{\mathbb T}} 
\newcommand{\oldspec}{{\mathcal N}_3}
\newcommand{\fixmehide}[1]{}
\title[Level repulsion for arithmetic toral point scatterers]{Level repulsion for arithmetic toral point scatterers in  dimension $3$}
\author{P\"ar Kurlberg}
\urladdr{www.math.kth.se/\~{ }kurlberg}
\address{Department of Mathematics, KTH Royal Institute of Technology,
SE-100 44 Stockholm, Sweden}
\email{kurlberg@math.kth.se}
\thanks{P.K. was partially supported by grants from 
the G\"oran Gustafsson Foundation for Research in Natural Sciences and
Medicine,
and the
Swedish Research Council (621-2011-5498, 2016-03701).}
\begin{document}
\begin{abstract}
  We show that arithmetic toral point scatterers in dimension three
  (``{\v{S}}eba billiards on $\R^3/\Z^3$'') exhibit
  strong level repulsion
  between the set of  ``new'' eigenvalues.
  More
  precisely, let $\Lambda := \{ \lambda_{1} < \lambda_{2} < \ldots \}$
  denote the ordered set of new eigenvalues. Then, given any
  $\gamma>0$, 
 $$\frac{|\{ i \leq N : \lambda_{i+1}-\lambda_{i} \leq
    \epsilon \}|}{N} = 
  O_{\gamma}(\epsilon^{4-\gamma})$$
as $N \to \infty$ (and $\epsilon>0$ small.)
  
\end{abstract}
\date{\today}
\maketitle





%


\section{Introduction}
\label{sec:introduction}
\subsection{Background}
The statistics of gaps between energy levels in the semiclassical
limit is a central problem in the theory of spectral statistics
\cite{mehta-rmt-book67,bohigas-les-houches}.
The Berry-Tabor conjecture \cite{berry-tabor} asserts that (typical)
integral systems have Poisson spacing statistics, and the
Bohigas-Giannoni-Schmit conjecture \cite{bgs} asserts that (generic)
chaotic systems should have spacing statistics given by random matrix
theory; in particular small gaps are very unlikely.

More precisely, with $\{ \lambda_{1} \leq \lambda_{2} \leq \ldots \}$
denoting the energy levels, suitably unfolded using the main term in
Weyl's law so that $|\{ i : \lambda_{i} < E \}| \sim E$ for $E$ large,
define consecutive gaps, or spacings,
$s_{i} := \lambda_{i+1}-\lambda_{i}$.  The level spacing distribution
$P(s)$, if it exists, is defined by
$$
\lim_{N \to \infty } \frac{ |\{i \leq N : s_{i} < x \}|  }{N} =
\int_{0}^{x} P(s) \, ds
$$
for all $x \geq 0$.  For Poisson spacing statistics, $P(s) = e^{-s}$,
whereas (time reversible) chaotic systems should have Gaussian
Orthogonal Ensemble (GOE) spacings, where $P(s) \approx \pi s/2$ for
$s$ small, and $P(s) \approx (\pi s/2) \exp( - \pi s^{2}/4)$ for $s$ large;
in particular, there is linear
vanishing at $s=0$ (``level repulsion'').
%

\subsubsection{Systems with intermediate statistics}
There are also
``pseudo integrable'' systems that are neither 
integrable nor chaotic.  Their spectral statistics do not fall into
the models described above and are believed to exhibit ``intermediate
statistics'', e.g. there is level repulsion as for random matrix
theory systems, whereas $P(s)$ 
has exponential tail decay similar to Poisson statistics.
%

The point scatterer, or the Laplacian perturbed by a delta potential,
for rectangular domains (i.e., in dimension $d=2$, with Dirichlet
boundary conditions) was introduced by {\v{S}}eba
\cite{seba-wave-chaos} as a model for investigating the transition
between integrability and chaos in quantum systems.  For this model
{\v{S}}eba found evidence for level repulsion of GOE type for small
gaps as well as ``wave chaos'', in particular Gaussian value distribution of
eigenfunctions.

Shigehara \cite{shigehara-wave-chaos} later pointed
out that level repulsion in dimension two
only occurs if ``strength'' of the perturbation grows logarithmically
with the eigenvalue $\lambda$. (The perturbation is formally defined
using von Neumann's theory of self adjoint extensions; in this setting
there is a one-parameter family of extensions, but any fixed parameter
choice turns out to result in a ``weak coupling limit'' with no level
repulsion, cf. \cite[Section~3]{ueberschar-point-scatterer-survey}.)
On the other hand, for dimension $d=3$, Cheon and Shigehara
\cite{shigehara-3d-torus} found GOE type level repulsion for ``fixed
strength'' perturbations for rectangular boxes, again with Dirichlet
boundary conditions and the scatterer placed at the center of the box;
placing the scatterer elsewhere appeared to weaken the repulsion.

Under certain randomness assumptions on the unperturbed spectrum,
GOE-type level repulsion and tails of Poisson type has been shown to
hold by Bogomolny, Gerland and Schmit
\cite{bogomolny-gerland-schmit-nearest-neighbor,bogomolny-gerland-schmit-singular}.
In particular, for {\v{S}}eba billiards with periodic boundary
conditions, $P(s) \sim (\pi \sqrt{3}/2) s$ (for $s$
small), whereas $P(s) \sim (1/8 \pi^3) s \log^{4} s$ in case of
Dirichlet boundary condition and generic scatterer position.
Interestingly, Berkolaiko, Bogomolny and Keating
\cite{keating-etal-star-graphs-and-seba} has shown that the
pair correlation in the former case is {\em identical} to the pair
correlation for quantum star graphs (for growing number of bonds with
generic lengths.)

\subsubsection{Toral point scatterers}
With periodic boundary conditions the location of the scatterer is
irrelevant and it is natural to consider tori.
%
In dimensions two and three, Rudnick and
Uebersch\"ar  \cite{rudnick-ueberschaer-point-scatterer-gaps} used
trace formula techniques to investigate the spacing 
distribution for toral point scatterers.
In dimension two, for a fixed self adjoint extension, resulting in a
weak coupling limit, they showed that the spectral statistics of the
new eigenvalues is the same for the old spectrum, possibly after
removing multiplicities.  (In the irrational aspect ratio case it is
believed that the old spectrum is of Poisson type, for partial results
in this direction cf.
\cite{sarnak-pair-correlation-almost-all,eskin-margulis-mozes-pair-correlation};
for the square torus $\R^2/\Z^2$ the same was shown to hold assuming
certain analogs of the Hardy-Littlewood prime $k$-tuple conjecture for
sums of two integer squares \cite{sots}.)
For $d=3$, a fixed self adjoint extension results in a strong coupling
limit, and here Rudnick-Uebersch\"ar gave  evidence for level
repulsion: the mean displacement between old and new eigenvalues was
shown to equal half the mean spacing between the old eigenvalues.
However, the method does not rule out the level spacing distribution
having (say) positive mass at $s=0$.

\subsection{Results}
The purpose of this paper is to show that there is  strong level
repulsion between the set of new eigenvalues for point scatters on
arithmetic tori in dimension three.
To state our main result we need to describe some basic properties of
the model.

\subsubsection{Toral point scatterers for arithmetic tori}
Let $\T := \R^3/(2\pi \Z^3)$ denote the standard flat 
torus in dimension three.  A point scatterer on $\T$ is formally given
by the Hamiltonian
$$
H = H_{x_{0}, \alpha} =  -\Delta + \alpha \delta_{x_{0}}, \quad \alpha \in \R
$$
where $\Delta$ is the Laplace operator acting on $L^{2}(\T)$,
$x_{0} \in \T$ is the location of the point scatterer, and $\alpha$ is
the ``strength'' of the perturbation; in the physics literature
$\alpha$ is known as the coupling constant.

A mathematically rigorous definition of $H$ can be made via von
Neumann's theory of self adjoint extensions, below we will briefly
summarize the most important properties but refer the reader to
\cite{albeverio-etal-book-solvable-models,rudnick-ueberschaer-point-scatterer-gaps,ueberschar-point-scatterer-survey}
for detailed discussions.


The spectrum of the unperturbed Laplacian on $\T$ is arithmetic in
nature, and given by $\{m \in \Z :
r_{3}(m) > 0 \}$, where
$$
r_{3}(m) := | \{ v \in \Z^3 : |v|^{2} = m \}|
$$
denotes the number of ways $m \in \Z$ can be written as a sum of three
integer squares, and the multiplicity of an eigenvalue $m$ is given by
$r_{3}(m)$.
The addition of a $\delta$-potential is a rank-$1$ perturbation of the
Laplacian, and the spectrum of $H$ consists of two kinds of
eigenvalues: ``old'' and ``new'' eigenvalues.  Namely, an eigenvalue
$m$ of the unperturbed Laplacian is also an eigenvalue of $H$, but
with multiplicity $r_{3}(m)-1$; the corresponding $H$-eigenspace is
just Laplace eigenfunctions vanishing at $x_{0}$.  For simpler
notation we shall, without loss of generality, from here on assume
that $x_{0} =0$. 

Associated to $m$
there is also a new eigenvalue $\lambda_{m}$ (of multiplicity one)
of $H$, and the set of new eigenvalues
$\{ \lambda_{m} : m \in \Z, r_{3}(m)>0 \}$ interlace between the old
eigenvalues.
%
More precisely, the corresponding new eigenfunction
$\psi_{\lambda_{m}}(x)$ is given by the
Green's function 
$$
\sum_{v \in \Z^3}  \frac{e^{2 \pi i v \cdot x}}{|v|^{2}-\lambda_{m}}
, \quad x \in \T,
$$
(in $L^{2}$ sense), with the new eigenvalue $\lambda_{m}$ being a
solution to the spectral equation
\begin{equation}
  \label{eq:spectral-equation}
G(\lambda) = \frac{1}{\nu}; \quad 
G(\lambda) :=  
\sum_{n} 
r_{3}(n) 
\left(
\frac{1}{n - \lambda } - \frac{n}{n^{2}+1}
\right),
\end{equation}
where $\nu \neq 0$ is known as the ``formal strength'' of the
perturbation in the physics literature
(cf. \cite[Eq. (4)]{shigehara-3d-torus}).

As mentioned above, the set of new eigenvalues interlace with the old spectrum
$\{ m \in \Z : r_{3}(m)>0 \}$, and it is convenient to abuse notation
and use the following labeling of the new spectrum: given $m$ such that
$r_{3}(m)>0$, let $\lambda_{m}$ denote the largest solution to
$G(\lambda) = 1/\nu$ such that $\lambda < m$.  (In
particular, $\lambda_{m}$ {\em does not} denote the $m$-th new
eigenvalue since $r_{3}(m)=0$ for a positive proportion of integers.)
For $m$ such that $r_{3}(m)>0$, let $m_{+}$ denote the
smallest integer $n > m$ such that $r_{3}(n)>0$, and define the
consecutive spacing
$$
s_{m} := \lambda_{m_{+}} - \lambda_{m}.
$$
Note that $\{ s_{m} \}_{m : r_{3}(m)>0}$ has mean $6/5$ (rather than
one), but as we are concerned mainly with the frequency of very small
spacings this shall not concern us.

%
%
%
%

\subsubsection{Statement of the main result}
We show that the { cumulant} of the nearest-neighbor distribution
essentially has fourth order vanishing near the origin, and hence
considably stronger repulsion than the quadratic order vanishing of
the cumulant in the GOE-model.

\begin{thm} 
\label{thm:main}
Given any small $\gamma>0$, we have
$$
\frac{|\{ m \leq x : r_{3}(m)>0, s_{m} < \epsilon  \} |}
{ |\{ m \leq x : r_{3}(m) >0 \}|}
= O_{\gamma}(\epsilon^{4-\gamma})
$$  
as $x \to \infty$ (and $\epsilon > 0$ small.)
\end{thm}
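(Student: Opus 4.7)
My plan is to first carry out a near-pole analysis of $G$ that translates the spacing condition $s_m < \epsilon$ into a quantitative arithmetic inequality on the regularized derivative $H_m'(m)/r_3(m)$, and then to bound the density of $m \le x$ satisfying this inequality via a $(2-\gamma)$-th moment estimate.

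Fix $m$ with $r_3(m) > 0$, set $a := m - \lambda_m > 0$, $b := \lambda_{m_+} - m > 0$, and write $G(\lambda) = r_3(m)/(m-\lambda) + H_m(\lambda)$, so that $H_m$ is real-analytic on $(m_-, m_+)$. The spectral equations $G(\lambda_m) = G(\lambda_{m_+}) = 1/\nu$ read
\begin{align*}
\frac{r_3(m)}{a} + H_m(m-a) = \frac{1}{\nu}, \qquad -\frac{r_3(m)}{b} + H_m(m+b) = \frac{1}{\nu}.
\end{align*}
Subtracting gives $r_3(m)(a+b)/(ab) = H_m(m+b) - H_m(m-a)$, and adding controls $1/a - 1/b$. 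Taylor expanding $H_m$ around $m$ to first order, a short calculation yields the fundamental identity
$$
s_m^2 \;=\; \frac{4 r_3(m)}{H_m'(m)} \;+\; \frac{4\bigl(1/\nu - H_m(m)\bigr)^2}{H_m'(m)^2} \;+\; O(s_m^3),
$$
so $s_m < \epsilon$ forces $H_m'(m)/r_3(m) > (4/\epsilon^2)(1-o(1))$; the accompanying constraint on $H_m(m) - 1/\nu$ is weaker and automatic once $H_m'(m)$ is large.

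Setting $F(m) := H_m'(m)/r_3(m)$ and $T := 4/\epsilon^2$, the proportion in the theorem is bounded by the density of $m \le x$ with $F(m) > T$. The key task is to establish the $(2-\gamma)$-th moment bound
$$
\sum_{m \le x,\, r_3(m) > 0} F(m)^{2-\gamma} \;=\; O_\gamma(x),
$$
as Chebyshev then yields a count $\ll_\gamma x T^{-(2-\gamma)} = x\,\epsilon^{4-2\gamma}$; dividing by $|\{m \le x : r_3(m) > 0\}| \asymp x$ and renaming $\gamma$ gives the theorem. To prove the moment bound I would dyadically decompose $H_m'(m) = \sum_{n \ne m} r_3(n)/(n-m)^2$ in the scale of $|n-m|$, and combine three arithmetic inputs: a Siegel-type lower bound $r_3(m) \gg_\gamma m^{1/2-\gamma}$ valid outside a sparse exceptional set, obtained from Gauss's class-number formula for $r_3$; the divisor-type upper bound $r_3(n) \ll_\gamma n^{1/2+\gamma}$; and shifted-convolution averages $\sum_{m \le x} r_3(m) r_3(m+k) \ll x^{2}$ uniform in small shifts $k$.

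The hardest step is the $(2-\gamma)$-th moment bound. Extracting exponent $2$ (rather than $1$) is exactly what converts the elementary $\epsilon^{2}$ scale of a single near approach into the $\epsilon^{4}$ scale of the theorem, heuristically reflecting that $s_m < \epsilon$ demands two simultaneous near-coincidences: $\lambda_m$ close to $m$ from below \emph{and} $\lambda_{m_+}$ close to $m$ from above. The $\epsilon^{-\gamma}$ loss will enter through pointwise control of $1/r_3(m)$ via class number lower bounds and through the divisor-type bound on $r_3(n)$, while the uniformity in the shift $k$ of the shifted-convolution estimate is the delicate arithmetic ingredient where I would expect the bulk of the technical work to lie.
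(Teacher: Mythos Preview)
Your reduction of the event $s_m < \epsilon$ to the inequality $F(m) := H_m'(m)/r_3(m) \gg \epsilon^{-2}$ is correct and matches the paper (which reaches the same conclusion via an elementary separation lemma rather than a Taylor expansion). The moment/Chebyshev strategy is also sound in principle: the bound $\sum_{m \le x,\,r_3(m)>0} F(m)^{2-\gamma} \ll_\gamma x$ does in fact hold and would give the theorem.

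The gap is in the ingredients you propose for that moment bound. The Siegel-type inequality $r_3(m) \gg_\gamma m^{1/2-\gamma}$ is \emph{false} for $m$ divisible by a high power of $4$: since $r_3(4^l n_0) = r_3(n_0)$, one has $r_3(m)/\sqrt{m} \asymp 2^{-l} L(1,\chi_{-4n_0})$, and these $m$ are not a negligible ``sparse exceptional set'' --- they are exactly where the small gaps live. If instead you use Siegel pointwise on the complement, the $m^{\gamma}$ loss in $1/r_3(m) \ll_\gamma m^{-1/2+\gamma}$ produces an extra factor $x^{O(\gamma)}$ in the moment sum, and the resulting estimate is not uniform as $x \to \infty$.

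The paper's mechanism is different and genuinely arithmetic. Writing $m = 4^l n_0$ with $4 \nmid n_0$, one has $F(m) \asymp 2^{l}\, B(m)/L(1,\chi_{-4n_0})$ where $B(m) := H_m'(m)/\sqrt{m}$. The exponent~$4$ comes from the exact balance $(2^{l})^{2}\cdot 4^{-l} = 1$ between the size $2^{l}$ of this factor and the density $4^{-l}$ of multiples of $4^{l}$; this is why the $(2-\gamma)$-th moment just barely converges. The factor $1/L(1,\chi)$ is controlled not by Siegel but by the Granville--Soundararajan distribution theorem, which makes $L(1,\chi_{-d}) < 1/T$ doubly-exponentially rare in~$T$. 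Finally, shifted convolution is a red herring: bounded moments of $B(m)$ follow immediately from H\"older together with Barban's estimate $\sum_{d \le x} L(1,\chi_{-d})^k \ll_k x$, with no correlation-sum input required. The technical weight is in the $4^{l}$-decomposition and the $L(1,\chi)$ distribution, not in shifted sums.
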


\subsection{Discussion}
\label{sec:discussion}

As Figures \ref{fig:l=0} and \ref{fig:l=10,20} indicate, spectral gap
statistics for $3d$ arithmetic point scatterers is clearly non-generic
since there is no mass at all in the tail.  The reason is the ``old''
spectrum being very rigid --- all positive integers $n$ except the
ones ruled out by simple congruence conditions (i.e., $n$'s of the
form $n = 4^{k} \cdot m$ for $m \equiv 7 \mod 8$) satisfy
$r_{3}(n)>0$, hence the gaps between new eigenvalues is easily seen to
be bounded above by $4$.

\begin{figure}[ht]
\centering
\includegraphics[width=.6\linewidth]{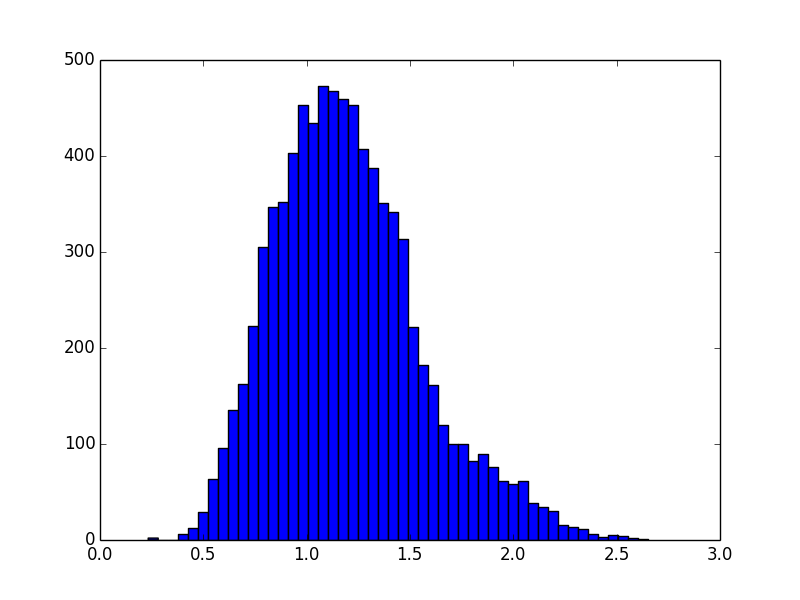}    
\caption{
Histogram illustration of the distribution of $s_{m}$, for $m \leq
10000$ (and $r_{3}(m)>0.)$}
  \label{fig:l=0}
\end{figure}

The main driving force in the fluctuations in gaps, say
between two new eigenvalues $\lambda_{m}$ and $\lambda_{m+}$, are
arithmetic in nature and mainly due to fluctuations in $r_{3}(m)$.  In
particular, the small gap repulsion is not due to lack of time
reversal symmetry, but rather due to $r_{3}(m)/\sqrt{m}$ being small
extremely rarely {\em unless} $4^{l} | m$ for some high exponent $l$.
In fact, as indicated by the plots below (as well as by the proof of
Theorem~\ref{thm:main}), small gap occurences is mainly due to
integers $m$ that are divisible by large powers of $4$.
\begin{figure}[ht]
  \centering
\includegraphics[width=.45\linewidth]{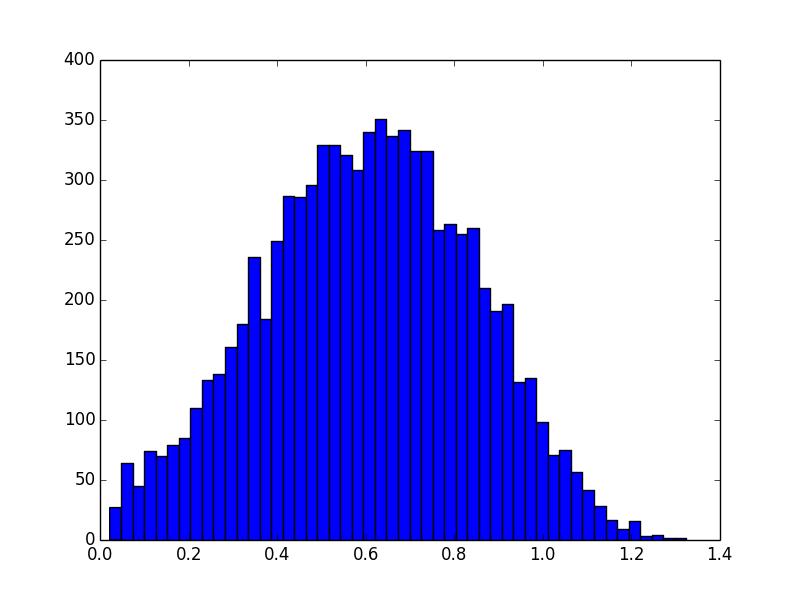}
\includegraphics[width=.45\linewidth]{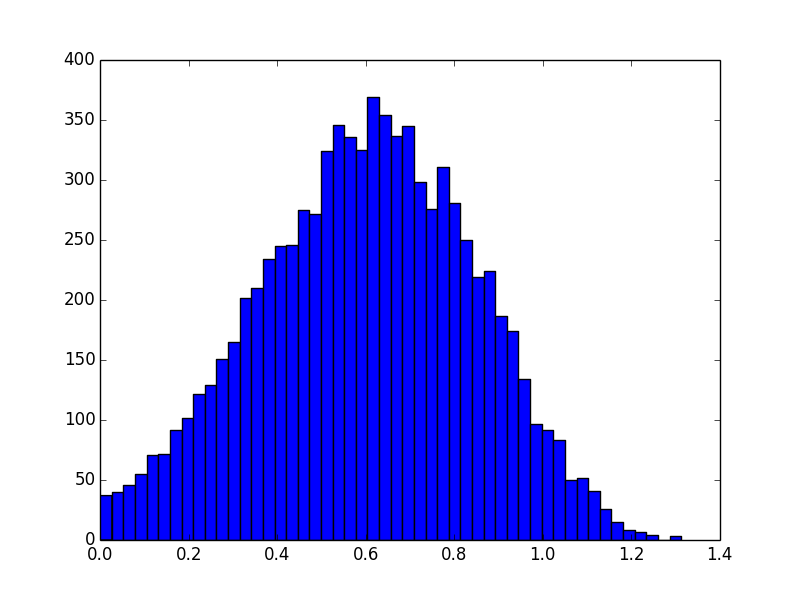}
\caption{Histogram illustration of the distribution of $s_{m}$ (for
  $m$ such that $r_{3}(m)>0$), along the progressions
  $\{m = 4^{10}\cdot k : k \leq 10000\}$ (left) and for
  $\{m = 4^{20}\cdot k : k \leq 10000\}$ (right).}
    \label{fig:l=10,20}
\end{figure}

\subsection{Acknowledgements}  The author wishes to thank Z.
Rudnick and H. Ueberschär for many helpful discussions and comments.
Part of this work was done at a working session on periodic point
scatterers during the program Periodic and Ergodic Spectral Problems
at the Newton Institute.  I would like to thank
J. Marklof for arranging the working session, and the Newton Institute
for its hospitality and excellent working conditions.

\section{Number theoretic background}

We begin by recalling some results on $r_{3}(n)$,
the number of ways to write $n \geq 0$ as a sum of three
integer squares.
A classical result of Legendre
(cf. \cite[Chapter~3.1]{grosswald-book}) asserts that
$r_{3}(n) \neq 0$ if and only if $n$ is not of the form $4^{a} (8k+7)$
for $a,k \in \Z_{\geq 0}$, and we also have
$$
r_{3}(4^{a}n) = r_{3}(n).
$$

%
We also recall a result of Heath-Brown
\cite{heath-brown-lattice-points-sphere}, namely that
\begin{equation}
  \label{eq:heath-brown-bound}
\sum_{n \leq R^{2}} r_{3}(n) =
\sum_{v \in\Z^3 : |v|^{2} \leq R^{2}} 1 
= \frac{4\pi}{3} R^{3} +  O(R^{21/16})
\end{equation}
We remark that for the main theorem any error term of the form
$ O(R^{3-\epsilon})$ and $\epsilon>0$ would suffice, but the exponent
$21/16$ is very helpful for the numerics behind Figures \ref{fig:l=0}
and \ref{fig:l=10,20}.


\subsection{Sums of three squares and values of $L$-functions}
\label{sec:sums-three-squares}

With $R_{3}(n)$ denoting the number of {\em primitive} representations
of $n$, i.e., the number of ways to write $n = x^{2}+y^{2}+z^{2}$ for
$x,y,z$ coprime, we have the following basic identity:
$$
r_{3}(n) = \sum_{d^2|n} R_{3}(n/d^{2}).
$$
The reason for introducing $R_{3}(n)$ is 
Gauss' marvelous identity (cf. \cite[Ch.~4.8]{grosswald-book})
$$
R_{3}(n) = \pi^{-1} \mu_{n} \sqrt{n}  L(1,\chi_{-4n})
$$
where $\mu_{n} = 0$ for $n \equiv 0,4,7 \mod{8}$, $\mu_{n} = 16$ for $n
\equiv 3 \mod 8$, and $\mu_{n} = 24$ for $n \equiv 1,2,5,6 \mod 8$;
$$
L(1,\chi_{-4n}) = \sum_{m=1}^{\infty} \chi_{-4n}(m)/m
$$
where $\chi_{-4n}$ is defined via the Kronecker symbol, namely
$$
\chi_{-4n}(m) := \left( \frac{-4n}{m} \right).
$$

Now, $-4n$ is not always a fundamental discriminant\footnote{
  A fundamental discriminant is an integer  $d \equiv 0,1 \mod 4$ such
  that  $d$ is squarefree if $d \equiv 1$, and $d=4m$ for $m \equiv
  2,3$ square free if $d \equiv 0 \mod 4$.}, but if $R_{3}(n) >0$
and we write
$$
n = c^{2} d
$$
where $d$ is square free and $4 \nmid c^{2}d$, then $L(1,\chi_{-4n})$
and $L(1,\chi_{-4d})$ have the same Euler factors, except at primes
$p$ dividing $4^{l} c^{2}$.
Moreover, $-4d$ is a fundamental discriminant if $n \equiv 1,2,5,6
\mod 8$. If $n \equiv 3 \mod 8$ then $-4d$ is not a fundamental
discrimant but $-d$ is, and the Euler factors of $L(1,\chi_{-4d})$ and
$L(1,\chi_{-d})$ are the same at all odd primes.  (Recall that
$n \not \equiv 0,4,7 \mod{8}$ since we assume that $R_{3}(n) >0$.)

Thus, if $4 \nmid n_{0}$  and we write $n_{0} = c^{2} d$ with $d$
square free, we note the
following useful lower bound in terms of
$L$-functions attached to 
primitive characters (associated with fundamental
discriminants)
\begin{equation}
  \label{eq:fund-disc-compare}
L(1,\chi_{-4n_{0}}) \gg
\begin{cases}
(\phi(c)/c) L(1, \chi_{-4d})    & \text{ if $n_{0} \equiv 1,2,5,6 \mod 8$,}
\\  
 (\phi(c)/c) L(1, \chi_{-d}) & \text{ if $n_{0} \equiv 3 \mod 8$,}
\end{cases}
\end{equation}
where $f(x) \gg g(x)$  means that $f(x) > c g(x)$ for some absolute
constant $c>0$.

We next show that $L(1,\chi_{-4n})$ is small very rarely.  With
$$
FD := \{ d \in \Z : d <0   \text{ and $d$ is a fundamental discriminant} \}.
$$ 
the following Proposition is an easy consequence of
\cite[Proposition~1]{granville-sound-L-one-chi-distribution}.
\begin{prop}
\label{prop:granville-sound}
There exists $c>0$ such that for $T \ge 1$, we have\footnote{We recall
  Vinogradov's ``$\ll$ notation'': $f \ll g$ is equivalent to
  $f = O(g)$.  When allowing implicit constants to depend on
a  parameter (say $\gamma$), $f \ll_{\gamma} g$ is equivalent to
  $f = O_{\gamma}(g)$.}
$$
|\{ d \leq x : -d \in FD, 
L(1,\chi_{-d}) < \frac{\pi^{2}}{6 e^{\gamma} T} \}|
\ll x \exp(-c \cdot e^{T}/T)
$$
as $x \to \infty$.
\end{prop}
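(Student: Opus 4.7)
The plan is to reduce the statement directly to the tail estimate in Proposition~1 of Granville--Soundararajan's paper on the distribution of $L(1,\chi_d)$. Their Proposition~1 gives a very precise approximation of the distribution function of $L(1, \chi_{-d})$ over $-d \in FD$ by a random Euler-product model $L(1, \mathbb{X})$, where $\mathbb{X}(p)$ are appropriately distributed independent random variables. In particular it quantifies the doubly exponential decay of the lower tail.

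First I would recall the extremal behaviour that motivates the threshold. The smallest formal values of $L(1, \chi_{-d})$ arise when $\chi_{-d}(p) = -1$ for all small primes, which makes the Euler product behave like $\prod_{p\le y}(1+1/p)^{-1}$. Comparing against the Mertens-style asymptotic $\prod_{p\le y}(1-1/p)^{-1} \sim e^\gamma \log y$ and using $\prod_p(1-p^{-2}) = 1/\zeta(2) = 6/\pi^2$, one sees that the natural scale for the lower tail of $L(1, \chi_{-d})$ is $\pi^2/(6 e^\gamma \log\log |d|)$. This is exactly why the threshold in the Proposition takes the form $\pi^2/(6 e^\gamma T)$: the parameter $T$ plays the role of the tail depth in Granville--Soundararajan's distribution function.

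Next I would transcribe Granville--Soundararajan's tail bound. Their Proposition~1, specialized to the lower tail, says that the proportion of fundamental discriminants $-d \in FD$ with $d \le x$ satisfying $L(1,\chi_{-d}) < \pi^2/(6 e^\gamma T)$ is at most $\exp(-c_1 e^T/T)$ for some absolute $c_1 > 0$, uniformly for $T$ in a range that reaches essentially up to $T \approx \log\log x - \log\log\log x$. Multiplying by the standard asymptotic $|\{ d \le x : -d \in FD\}| \sim (6/\pi^2) x$ converts their density statement into the desired unnormalized count $\ll x \exp(-c e^T/T)$.

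The one bookkeeping issue is handling $T$ outside the explicit range. For $T$ comparable to $\log\log x$ the bound $\exp(-c e^T/T)$ is smaller than any polynomial, so even Siegel/Tatuzawa-type unconditional lower bounds for $L(1,\chi_{-d})$ (which rule out all but $O(x^{1/2+o(1)})$ exceptional discriminants) render the event essentially empty and the inequality survives after adjusting $c$. I expect the main (and only) obstacle here is not the proof, which is a citation, but confirming that the normalization constants in Granville--Soundararajan's formulation line up with the constant $\pi^2/(6e^\gamma)$ used above; once this is checked the conclusion is immediate.
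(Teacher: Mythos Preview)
Your approach is correct and is exactly what the paper does: the paper gives no proof at all, merely stating that the Proposition ``is an easy consequence of \cite[Proposition~1]{granville-sound-L-one-chi-distribution}.'' Your write-up is in fact more detailed than the paper's treatment, supplying the heuristic for the constant $\pi^2/(6e^\gamma)$ and the bookkeeping for large $T$, but the underlying route is identical.
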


For imprimitive quadratic characters we will use the following 
weaker bound.
\begin{prop} 
\label{prop:L-one-lower-bound}
  The number of integers $n \leq x$ of the form $n = 4^{l} n_{0} =
  4^{l} c^{2} d$, where $d$ is square free, $c \leq C$, $4 \nmid
  c^{2}d$, and
$$
L(1,\chi_{-4n_{0}}) \leq 1/T
$$
is
$$
\ll \frac{x}{4^{l}} \exp( - (T/\log\log C)^{4})
$$
for all integer $l \geq 0$.  
\end{prop}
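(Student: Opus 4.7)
The plan is to reduce Proposition~\ref{prop:L-one-lower-bound} to Proposition~\ref{prop:granville-sound} via the comparison \eqref{eq:fund-disc-compare}, and then carefully track the auxiliary variables $c$ and $l$. I would fix $l \geq 0$ and decompose the count according to $c \leq C$. For each such $c$, the variable $d$ ranges over square-free positive integers with $d \leq x/(4^{l} c^{2})$ and $4 \nmid c^{2} d$. Applying \eqref{eq:fund-disc-compare} with the case split on $c^{2} d \bmod 8$, the hypothesis $L(1,\chi_{-4n_{0}}) \leq 1/T$ forces $L(1,\chi_{-D}) \ll (c/\phi(c))/T$, where $-D$ is the corresponding fundamental discriminant (either $-4d$ or $-d$, depending on the residue class). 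Using the classical Mertens-type bound $c/\phi(c) \ll \log\log c \leq \log\log C$ (valid for $c$ not too small; the remaining small $c$ contribute a bounded amount), this upgrades to $L(1,\chi_{-D}) \leq B \log\log C / T$ for some absolute constant $B>0$.

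Next I would apply Proposition~\ref{prop:granville-sound} with its parameter $\widetilde T$ chosen so that $\pi^{2}/(6 e^{\gamma} \widetilde T) \asymp B \log\log C / T$, i.e.\ $\widetilde T \asymp T / \log\log C$. Noting that the map $d \mapsto D$ is at most $O(1)$-to-one on square-free $d$, Proposition~\ref{prop:granville-sound} bounds the number of admissible $d$ (for each fixed $c$) by $\ll (x/(4^{l} c^{2})) \exp(-c_{1} e^{\widetilde T}/\widetilde T)$. Summing over $c \leq C$ using $\sum_{c} c^{-2} \ll 1$ yields $\ll (x/4^{l}) \exp(-c_{1} e^{\widetilde T}/\widetilde T)$. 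The exponent then simplifies via the crude estimate $e^{y}/y \gg y^{4}$, valid for $y$ sufficiently large (the small-$T$ regime is absorbed by the trivial bound $\ll x/4^{l}$ on the total count of $n_{0} \leq x/4^{l}$), giving the claimed $\exp(-(T/\log\log C)^{4})$ after absorbing multiplicative constants.

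The hard part will not be conceptual but combinatorial: organising the case analysis on $c^{2} d \bmod 8$ so that \eqref{eq:fund-disc-compare} can be invoked uniformly (including degenerate residues such as $c^{2} d \equiv 7 \bmod 8$, where one still ends up comparing with a primitive character of the form $\chi_{-d}$), and verifying that the fibres of the map $d \mapsto D$ are uniformly bounded so that counting on the primitive-discriminant side cleanly controls the count on the imprimitive side. Once this bookkeeping is settled, the rest is a direct transfer from Proposition~\ref{prop:granville-sound}.
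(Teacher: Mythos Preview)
Your proposal is correct and follows essentially the same route as the paper: fix $l$ and $c$, use \eqref{eq:fund-disc-compare} together with $c/\phi(c)\ll\log\log C$ to reduce to a primitive $L$-value bound, invoke Proposition~\ref{prop:granville-sound}, and then sum over $c\le C$ via $\sum c^{-2}\ll 1$. You are in fact more explicit than the paper on two points it leaves implicit---the crude simplification $e^{y}/y\gg y^{4}$ (which is how the paper passes from the Granville--Soundararajan bound to the stated $\exp(-(T/\log\log C)^{4})$) and the $O(1)$-to-one nature of the map $d\mapsto D$---so the bookkeeping you flag as ``the hard part'' is exactly the routine verification the paper absorbs into its one-line appeal to Proposition~\ref{prop:granville-sound}.
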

\begin{proof}
  Using \eqref{eq:fund-disc-compare}, together with the bound
  $c/\phi(c) \ll \log\log c \leq \log \log C$ we find that for $c,l$
  fixed, we have (note that either $-d$ or $-4d$ is a fundamental
  discriminant)
$$
|\{ n \leq x : n = 4^{l} n_{0} = 4^{l} c^{2} d, 
L(1,\chi_{-4n_0}) \leq 1/T \}|
$$
$$
\ll
|\{ d \leq 4x/(c^{2}4^{l}) : -d \in FD, L(1,\chi_{-d}) \ll \frac{\log
  \log C}{T}  \}| 
$$
which, by Proposition~\ref{prop:granville-sound}, is
$$
\ll x/(c^{2}4^{l}) \exp( -(T/\log \log C)^{4})
$$
as $x \to \infty$.  Summing over $c \leq C$ the proof is concluded.
\end{proof}

\subsection{Estimates on moments of $r_{3}(n)$}
We recall the following bound by Barban
\cite{barban-large-sieve}.
\begin{thm}
\label{thm:barban}
For $k \in \Z^+$ we have
$$
\sum_{1 \leq d \leq x} L(1,\chi_{-d})^{k} \ll_{k} x.
$$  
\end{thm}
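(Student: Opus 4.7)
The plan is to expand $L(1,\chi_{-d})^k$ as a truncated Dirichlet series and interchange the order of summation with the sum over $d \leq x$, reducing matters to bounding character sums in the variable $d$ for the Kronecker symbol. By partial summation together with Polya-Vinogradov applied to $\sum_{n \leq N}\chi_{-d}(n)$, one has the standard approximation
$$
L(1,\chi_{-d}) \;=\; \sum_{n \leq Y} \frac{\chi_{-d}(n)}{n} + O\!\left( \frac{\sqrt{d}\log d}{Y}\right),
$$
and choosing $Y = x^{A}$ with $A = A(k)$ sufficiently large makes the cross-term contribution of the tail to $\sum_{d \leq x} L(1,\chi_{-d})^k$ (expanded by the binomial theorem) negligible, so it suffices to bound the main part.

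Next I would expand the $k$-th power of the truncated sum and swap summation, parametrizing by $m = n_{1}\cdots n_{k}$ to obtain
$$
\sum_{d \leq x} L(1,\chi_{-d})^k
\;\approx\;
\sum_{m \leq Y^{k}} \frac{\alpha_{k}(m;Y)}{m}\, \sum_{d \leq x} \chi_{-d}(m),
$$
where $\alpha_{k}(m;Y) \leq d_{k}(m)$ counts ordered $k$-factorizations $m=n_{1}\cdots n_{k}$ with each factor $\leq Y$. Writing $m = \ell^{2} m_{0}$ with $m_{0}$ squarefree and using quadratic reciprocity, $\chi_{-d}(m_{0})$ is (for $m_{0}>1$) a nontrivial Dirichlet character in $d$ of conductor dividing $4m_{0}$, while $\chi_{-d}(\ell^{2}) = \mathbf{1}_{(d,\ell)=1}$. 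The diagonal contribution from $m_{0}=1$ evaluates to $x\sum_{\ell} \alpha_{k}(\ell^{2};Y)\phi(\ell)/\ell^{3} + O(\ldots)$, which is $O_{k}(x)$ since $\alpha_{k}(\ell^{2};Y) \leq d_{k}(\ell^{2}) \ll_{\varepsilon} \ell^{\varepsilon}$ ensures absolute convergence.

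The main obstacle is controlling the off-diagonal contribution $m_{0}>1$. A direct appeal to Polya-Vinogradov gives $|\sum_{d \leq x}\chi_{-d}(m)| \ll \sqrt{m_{0}}\log m$, but weighted by $d_{k}(m)/m$ and summed up to $Y^{k}$ this yields a power of $Y$ that exceeds $x$ as soon as $k \geq 3$, so the individual Polya-Vinogradov bound is insufficient. The fix is to replace it by the large-sieve inequality for real characters in the form
$$
\sum_{d \leq x} \Bigl|\sum_{n \leq N} a_{n}\chi_{-d}(n)\Bigr|^{2} \;\ll\; (x+N)(\log xN)\sum_{n \leq N} |a_{n}|^{2},
$$
combined with a dyadic decomposition of $m$ and Cauchy-Schwarz, using the convergence of $\sum_{n} d_{k}(n)^{2}/n^{1+\delta} \ll_{k,\delta} 1$ to control the coefficient norms. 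This recovers a mean-square saving that compensates for the growth in $k$ and delivers the desired $O_{k}(x)$ bound on the off-diagonal, completing the proof.
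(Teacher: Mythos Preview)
The paper does not supply a proof of this statement: it is quoted as a known result of Barban (the citation \texttt{barban-large-sieve}) and used as a black box in the proof of Proposition~\ref{prop:k-th-moment-bound}. So there is no ``paper's proof'' to compare against; your sketch is being measured against the classical literature.

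Your strategy---truncate $L(1,\chi_{-d})$ via P\'olya--Vinogradov, expand the $k$-th power, isolate the diagonal $m=\square$, and control the off-diagonal with a mean-value estimate for real character sums---is precisely the standard route, and is very much in the spirit of Barban's large-sieve method. Two technical points are worth flagging. First, the large-sieve inequality you quote is closest to Heath-Brown's estimate for real characters, which carries a factor $(xN)^{\varepsilon}$ rather than $\log(xN)$ and requires the outer variable to be odd and squarefree; you therefore need to first peel off the square part $\ell^{2}$ of $m$ and handle the induced coprimality condition $(d,\ell)=1$ by M\"obius before invoking it. Second, applying Cauchy--Schwarz globally with the weight $\sum_{n} d_{k}(n)^{2}/n^{1+\delta}$ as you describe yields only $O_{k,\varepsilon}(x^{1+\varepsilon})$, because the $(xN)^{\varepsilon}$ factor propagates through the dyadic sum. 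To obtain the clean $O_{k}(x)$ you should instead split the off-diagonal at $m_{0}\le x$ versus $m_{0}>x$: P\'olya--Vinogradov alone handles the former range with a contribution $\ll_{k} x^{1/2}(\log x)^{k}$, and the large sieve is only needed for $m_{0}>x$, where the extra decay from $\sum_{m_{0}>x} d_{k}(m_{0})^{2}/m_{0}^{1+\delta}\ll_{k} x^{-\delta/2}$ absorbs the $\varepsilon$-loss. With that refinement your argument closes; for the purposes of this paper the cruder $x^{1+\varepsilon}$ bound would in any case be absorbed into the parameter $\gamma$ in Theorem~\ref{thm:main}.
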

We can now easily deduce bounds on the (normalized) moments of
$r_{3}(n)$.
\begin{prop}
\label{prop:k-th-moment-bound}
For $k \in \Z^+$ we have
$$
\sum_{n \leq x} (r_{3}(n)/\sqrt{n})^{k}  \ll_{k} x
$$  
\end{prop}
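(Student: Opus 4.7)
The plan is to pass from $r_{3}(n)$ to $L$-values at quadratic characters via Gauss's identity, and then invoke Barban's bound. Starting from the convolution identity $r_{3}(n) = \sum_{d^{2}\mid n} R_{3}(n/d^{2})$ together with the estimate $R_{3}(n') \ll \sqrt{n'}\, L(1,\chi_{-4n'})$ (immediate from Gauss's formula, since $\mu_{n'}$ is bounded), I obtain
\[
  \frac{r_{3}(n)}{\sqrt{n}} \ll \sum_{d^{2} \mid n} \frac{L(1,\chi_{-4n/d^{2}})}{d}.
\]

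The next step is to raise this to the $k$-th power while retaining a clean sum. A direct application of H\"older's inequality with exponents $k$ and $k/(k-1)$, with the weights chosen so that the complementary factor $\bigl(\sum_{d^{2} \mid n} d^{-1-\eta/(k-1)}\bigr)^{k-1}$ remains bounded uniformly in $n$ by $\zeta(1+\eta/(k-1))^{k-1}$, gives, for any small fixed $\eta>0$,
\[
  \left(\frac{r_{3}(n)}{\sqrt{n}}\right)^{k} \ll_{k,\eta} \sum_{d^{2}\mid n} \frac{L(1,\chi_{-4n/d^{2}})^{k}}{d^{1-\eta}}.
\]
Summing over $n \leq x$ and reparameterizing as $n = d^{2} m$ reduces matters to controlling an inner $L$-value sum:
\[
  \sum_{n\leq x} \left(\frac{r_{3}(n)}{\sqrt{n}}\right)^{k} \ll_{k,\eta} \sum_{d\leq\sqrt{x}} \frac{1}{d^{1-\eta}} \sum_{m\leq x/d^{2}} L(1,\chi_{-4m})^{k}.
\]

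For the inner sum I use the primitive-versus-imprimitive comparison underlying \eqref{eq:fund-disc-compare}: writing $m = c^{2} d_{0}$ with $d_{0}$ squarefree, comparison of Euler factors gives the upper-bound analog $L(1,\chi_{-4m}) \ll (c/\phi(c))\, L(1,\chi_{-d_{0}^{*}})$, where $-d_{0}^{*}$ is the fundamental discriminant attached to $-4m$ (one of $-d_{0}$, $-4d_{0}$, according to $d_{0} \bmod 4$). Barban's theorem (Theorem~\ref{thm:barban}) then yields
\[
  \sum_{m\leq y} L(1,\chi_{-4m})^{k} \ll_{k} y \cdot \sum_{c\geq 1} \frac{(c/\phi(c))^{k}}{c^{2}} \ll_{k} y,
\]
the last series converging by a one-line Euler product estimate (the local factor at $p$ is $1 + (p/(p-1))^{k}/(p^{2}-1) = 1 + O_{k}(1/p^{2})$).

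Combining the above, $\sum_{n\leq x} (r_{3}(n)/\sqrt{n})^{k} \ll_{k} x \sum_{d\geq 1} d^{-3+\eta} \ll_{k} x$ for $\eta$ sufficiently small, as required. The only nontrivial step is the passage from imprimitive to primitive quadratic characters in the inner $L$-sum; this essentially duplicates the argument underlying \eqref{eq:fund-disc-compare}. The H\"older step (and the choice of $\eta$) as well as the Euler product convergence are routine bookkeeping.
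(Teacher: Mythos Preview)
Your argument is correct and rests on the same two inputs as the paper's proof: Gauss's formula expressing $R_{3}$ in terms of $L(1,\chi)$, and Barban's moment bound (Theorem~\ref{thm:barban}). The execution differs in how the $k$-th power is handled. The paper expands $\bigl(\sum_{d^{2}\mid n} R_{3}(n/d^{2})/\sqrt{n}\bigr)^{k}$ directly, groups the resulting $k$-tuples $(d_{1},\ldots,d_{k})$ by their least common multiple $d$, separates out powers of $4$ via $r_{3}(4^{l}n_{0})=r_{3}(n_{0})$, and then compares all the $L(1,\chi_{-4n_{0}d^{2}/d_{i}^{2}})$ to a single $L(1,\chi_{-4n_{0}})$ at the cost of a factor $(d/\phi(d))^{k}$. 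You instead apply a weighted H\"older inequality to collapse the $k$-fold product into a single $L$-value raised to the $k$-th power, and absorb the $4$-adic part into the generic square-part decomposition $m=c^{2}d_{0}$. Your route is a little cleaner and sidesteps the lcm bookkeeping; the paper's direct expansion is more hands-on but of comparable length. Both approaches ultimately need $\sum_{m\le y} L(1,\chi_{-4m})^{k}\ll_{k} y$, reduced to Barban's theorem via Euler-factor comparison with the underlying primitive character; you carry out this reduction more explicitly than the paper does.
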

\begin{proof}
Writing $n = 4^{l} n_{0}$ so that $4 \nmid n_{0}$ we have 
$$
r_{3}(n) = r_{3}(n_{0})
$$
and we further recall the identities
(cf. Section~\ref{sec:sums-three-squares}) 
$$
r_{3}(n) = \sum_{d^2|n} R_{3}(n/d^{2})
$$  
and
$$
R_{3}(n) = \mu_{n} \sqrt{n} L(1,\chi_{-4n}),
$$
where $\mu_{n} = 0$ if $n \equiv 0,4,7 \mod 8$; otherwise $16 \le \mu_n \leq
24$.
Thus
$$
\sum_{n\leq x} 
(r_{3}(n)/\sqrt{n})^{k} = 
\sum_{n\leq x} 
\left(
\sum_{\substack{d^2|n, d \equiv 1 \mod{2}\\4^{l}|| n}} 
\frac{R_{3}(n/(4^{l}d^{2}))}{\sqrt{n}}
\right)^{k}
$$
$$
\ll
\sum_{l : 4^{l} \leq x}
\sum_{\substack{  d^{2} \leq x/4^{l} \\ d \text{ odd} }}
\sum_{\substack{d_1,\ldots, d_{k} \\ [d_{1}, \ldots, d_{k}] = d}}
\sum_{\substack{n \leq x : d^{2} |n \\ 4^{l} || n}}
\prod_{i=1}^{k} \frac{R_{3}(n/(4^{l}d_{i}^{2}))}{\sqrt{n}} \ll
$$
$$
\sum_{l : 4^{l} \leq x}
\sum_{\substack{  d^{2} \leq x/4^{l} \\ d \text{ odd} }}
\sum_{\substack{d_1,\ldots, d_{k} \\ [d_{1}, \ldots, d_{k}] = d}}
\sum_{\substack{n_{0} \leq x/ d^{2}4^{l}}}
\prod_{i=1}^{k} \frac{R_{3}(n_{0}d^{2}/d_{i}^{2})}{\sqrt{d^{2}4^{l}n_{0}}} \ll
$$
$$
\sum_{l : 4^{l} \leq x}
\sum_{\substack{  d^{2} \leq x/4^{l} \\ d \text{ odd} }}
\sum_{\substack{d_1,\ldots, d_{k} \\ [d_{1}, \ldots, d_{k}] = d}}
\sum_{\substack{n_{0} \leq x/ d^{2}4^{l}}}
\prod_{i=1}^{k} \frac{L(1, \chi_{-4n_{0}d^{2}/d_{i}^{2}})}{d_{i}} \ll
$$
which, on noting\footnote{The Euler products for 
$L(1, \chi_{-4n_{0}d^{2}/d_{i}^{2}})$ and
$L(1, \chi_{-4n_{0}}) $ agree at all primes $p$ such that
$p\nmid d$.} that $L(1, \chi_{-4n_{0}d^{2}/d_{i}^{2}}) \ll
L(1, \chi_{-4n_{0}}) d/\phi(d) $ is
\begin{equation}
  \label{eq:noname}
\ll
\sum_{l : 4^{l} \leq x}
\sum_{\substack{  d^{2} \leq x/4^{l} \\ d \text{ odd} }}
\sum_{\substack{d_1,\ldots, d_{k} |d}}
\frac{(d/\phi(d))^{k}}{d_{1}\cdots d_{k}}
\sum_{\substack{n_{0} \leq x/ d^{2}4^{l}}}
L(1, \chi_{-4n_{0}})^{k}.
\end{equation}
By Theorem~\ref{thm:barban}, the inner sum over $n_{0}$ is $\ll_{k}
x/(d^{2}4^{l})$, and hence \eqref{eq:noname} is 
$$\ll
\sum_{l : 4^{l} \leq x}
\sum_{\substack{  d^{2} \leq x/4^{l} \\ d \text{ odd} }}
\sum_{\substack{d_1,\ldots, d_{k} |d}}
\frac{(d/\phi(d))^{k}}{d_{1}\cdots d_{k}}
\frac{x}{d^{2}4^{l}}
\ll
\sum_{l : 4^{l} \leq x}
\sum_{\substack{  d^{2} \leq x/4^{l} \\ d \text{ odd} }}
(d/\phi(d))^{2k}
\frac{x}{d^{2}4^{l}}
\ll x
$$
using that $\sum_{d_i|d} 1/d_{i} \ll d/\phi(d) \ll d^{o(1)}$.

\end{proof}

By Chebychev's inequality we immediately deduce:
\begin{cor}
Given $k \in \Z^+$, we have
$$
|\{ n \leq x : r_{3}(n)/\sqrt{n}  > T \}|
\ll_{k} x/T^{k}
$$  
\end{cor}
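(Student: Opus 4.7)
The plan is to invoke Markov's inequality (called Chebychev's inequality in the analytic number theory tradition) applied directly to the $k$-th moment bound of Proposition~\ref{prop:k-th-moment-bound}. Concretely, the first step is to observe that whenever $r_{3}(n)/\sqrt{n} > T$ one has $(r_{3}(n)/\sqrt{n})^{k} > T^{k}$, so bounding the indicator of the event by $(r_3(n)/\sqrt{n})^k/T^k$ and summing over $n \leq x$ gives
$$
T^{k} \cdot |\{n \leq x : r_{3}(n)/\sqrt{n} > T\}| \;\leq\; \sum_{n \leq x} \left(\frac{r_{3}(n)}{\sqrt{n}}\right)^{k}.
$$

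The second step is then to apply Proposition~\ref{prop:k-th-moment-bound} to bound the right-hand side by $\ll_{k} x$. Dividing through by $T^{k}$ immediately yields the claim, with implied constant depending only on $k$.

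There is no real obstacle at this stage; all of the difficulty has already been absorbed into the moment bound, which in turn rested on Barban's theorem (Theorem~\ref{thm:barban}) together with the reduction through Gauss' identity and the comparison \eqref{eq:fund-disc-compare} between imprimitive and primitive $L$-values. The only subtlety worth flagging is that $k$ is a free parameter, so in subsequent applications one is free to take $k$ arbitrarily large (at the cost of a worse implied constant $O_k(1)$) in order to obtain as strong a polynomial tail decay as the surrounding argument requires.
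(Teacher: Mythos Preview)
Your argument is exactly the paper's: the corollary is stated immediately after Proposition~\ref{prop:k-th-moment-bound} with the single remark ``By Chebychev's inequality we immediately deduce,'' which is precisely the Markov/Chebyshev step you wrote out.
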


\section{Proof of Theorem~\ref{thm:main}}

Let $\oldspec := \{ n \in \Z : r_{3}(n)>0 \}$ denote the old spectrum.
Given $n \in \oldspec$, let $n_{+}$ denote its nearest right
neighboor in $\oldspec$, and recall our definition
$$
s_{n} := \lambda_{n_{+}} - \lambda_{n}
$$
of the nearest neighboor spacing between the two new eigenvalues
$\lambda_{n}, \lambda_{n_{+}}$.

The spectral equation (cf. (\ref{eq:spectral-equation})) is then given by
$$
\sum_{n} r_{3}(n) \left( 
\frac{1}{n-\lambda} 
- \frac{n}{n^{2}+1}
\right)
=
1/\nu
$$
where $\nu$ is constant.
For 
simpler notation we shall only treat the case of when the
R.H.S. equals zero, corresponding to $\nu = \infty$, 
but we remark that the method of proof gives the
same result when $1/\nu$ is allowed to change, sufficiently smoothly,
with $\lambda$, as long as we are in the strong coupling region
$1/\nu = O(\lambda^{1/2-\epsilon})$ for $\epsilon>0$
(cf. \cite[Section II]{shigehara-3d-torus}).  

We first show that the sum may be truncated without significantly
changing the the new eigenvalues (the proof is relegated to the
appendix.)
\begin{lem}
  \label{lem:truncation}
There exists $\alpha,\delta \in (0,1)$ such that 
$$
\sum_{n : |n-\lambda| > \lambda^{\alpha}} r_{3}(n)
\left( \frac{1}{n-\lambda} - \frac{n}{n^{2}+1} \right)
\ll \lambda^{1/2-\delta}
$$  
(in fact, using (\ref{eq:heath-brown-bound}) we may take $\alpha=1/6$.)
\end{lem}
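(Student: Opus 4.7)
The plan is to compare the sum with an integral via Abel summation and the Heath-Brown asymptotic \eqref{eq:heath-brown-bound}. Write $R(x) := \sum_{n \leq x} r_3(n) = \frac{4\pi}{3}x^{3/2} + E(x)$ with $E(x) \ll x^{21/32}$, and set $F(x) := \frac{1}{x-\lambda} - \frac{x}{x^2+1}$. Then
\begin{equation*}
\sum_{n:\,|n-\lambda|>\lambda^{\alpha}} r_3(n)\, F(n) \;=\; \int_{|x-\lambda|>\lambda^{\alpha}} F(x)\,dR(x) \;=\; M(\lambda) + \mathcal{E}(\lambda),
\end{equation*}
where $M(\lambda) := 2\pi \int_{|x-\lambda|>\lambda^\alpha} \sqrt{x}\,F(x)\,dx$ is the main integral (from $dR_{\mathrm{main}}(x) = 2\pi\sqrt{x}\,dx$) and $\mathcal{E}(\lambda) := \int_{|x-\lambda|>\lambda^\alpha} F(x)\,dE(x)$ is the error integral.

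The crucial observation is that the counter-term $-x/(x^2+1)$ in $F$ renormalizes the otherwise divergent integral $\int \sqrt{x}/(x-\lambda)\,dx$, making the full principal value $c_0 := 2\pi\,\mathrm{PV}\int_0^\infty \sqrt{x}\,F(x)\,dx$ a \emph{constant} independent of $\lambda$. Concretely, substituting $u=\sqrt{x}$ and partial-fractioning yields $c_0 = 4\pi\int_0^\infty\bigl[\lambda/(u^2-\lambda) + 1/(u^4+1)\bigr]\,du$; the first piece's PV vanishes by symmetry, while $\int_0^\infty du/(u^4+1) = \pi/(2\sqrt{2})$, so $c_0 = \pi^2/\sqrt{2}$. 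The portion removed from $M(\lambda)$, i.e.\ the integral over $|x-\lambda|\leq\lambda^\alpha$, is handled the same way and contributes only $O(\lambda^{\alpha-1/2})$ after Taylor-expanding $\sqrt{x}$ around $\sqrt{\lambda}$ (the leading log-divergence cancels in the symmetric PV). Hence $M(\lambda) = c_0 + O(\lambda^{\alpha-1/2}) = O(1)$.

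For $\mathcal{E}(\lambda)$ I would integrate by parts on each of $[0,\lambda-\lambda^\alpha]$ and $[\lambda+\lambda^\alpha,\infty)$. Boundary terms at $x=\lambda\pm\lambda^\alpha$ are bounded by $|F(\lambda\pm\lambda^\alpha)|\cdot|E(\lambda\pm\lambda^\alpha)| \ll \lambda^{-\alpha}\cdot\lambda^{21/32}$; the boundary contributions at $0$ and $\infty$ vanish since $F(x)\ll \lambda/x^2$ for $x$ large and $E(0)=0$. The remaining $\int F'(x)\,E(x)\,dx$ is also $O(\lambda^{21/32-\alpha})$, the mass being concentrated on $|x-\lambda|\in[\lambda^\alpha,\lambda]$ where $|F'(x)| \asymp (x-\lambda)^{-2}$. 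Altogether the sum is $O(1) + O(\lambda^{21/32-\alpha})$; the choice $\alpha=1/6$ (yielding exponent $21/32-1/6 = 47/96 < 1/2$) gives the required $O(\lambda^{1/2-\delta})$ with $\delta=1/96$. The main obstacle is verifying the $\lambda$-independence of $c_0$: one must correctly interpret the PV at the pole $x=\lambda$ and verify that the $\sqrt{x}$-density renormalization exactly matches the counter-term in $F$; once this cancellation is in place, the estimate on $\mathcal{E}(\lambda)$ reduces to routine partial summation.
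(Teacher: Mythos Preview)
Your proposal is correct and follows essentially the same route as the paper: Abel summation against $R(x)=\tfrac{4\pi}{3}x^{3/2}+O(x^{21/32})$, identification of the full principal-value main-term integral as an $O(1)$ constant, the $O(\lambda^{\alpha-1/2})$ bound on the excised window $|x-\lambda|\le\lambda^{\alpha}$, and the $O(\lambda^{21/32-\alpha})$ bound on the error contribution, with the choice $\alpha=1/6$. The only cosmetic differences are that the paper first replaces $n/(n^{2}+1)$ by $1/n$ (so that its PV integral is \emph{exactly} zero rather than a nonzero constant) and that your stated value of $c_{0}$ is off by a factor of~$2$ (it should be $\pi^{2}\sqrt{2}$); neither point affects the argument.
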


The following simple result was used in \cite{3dscars} (cf. Lemma~7)
in order to study scarred eigenstates for arithmetic toral point
scatterers in dimensions two and three.  
\begin{lem}
\label{lem:spectral-separation}
  Assume that $A,B >0$.  Let $f(x)$ be a smooth function with $f'(x)
  \geq \epsilon >0$ for all $x \in [-1/2,1/2]$, and assume
  that $f'(x) \leq B$ for $|x| \leq 1/2$.  If the equation
$$
f(x) = A/x
$$
has two roots $x_{1} \in [-1/2,0)$ and $x_{2} \in (0,1/2]$, then
$$
|x_{2}-x_{1} |
\gg
\sqrt{A/B}.
$$
\end{lem}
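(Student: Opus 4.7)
The plan is to derive a lower bound on $|x_2 - x_1|$ from the hyperbola-like nature of the right hand side $A/x$, together with the Lipschitz control on $f$ coming from the upper bound $f' \leq B$. The lower bound $f' \geq \epsilon$ is not actually needed for the quantitative estimate itself — it only guarantees that $f$ is strictly increasing, which is a natural hypothesis but does not enter the arithmetic of the inequality.

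First I would record the sign information: since $x_1 \in [-1/2, 0)$, $x_2 \in (0, 1/2]$, and $A > 0$, the spectral equation $f(x_i) = A/x_i$ forces $f(x_1) < 0 < f(x_2)$. Hence $f(x_2) - f(x_1) = A/x_2 - A/x_1 > 0$. Next, by the mean value theorem applied on $[x_1, x_2]$ together with the hypothesis $f'(\xi) \leq B$ for $|\xi| \leq 1/2$, we obtain
\[
\frac{A}{x_2} - \frac{A}{x_1} \;=\; f(x_2) - f(x_1) \;\leq\; B\,(x_2 - x_1).
\]
Rewriting the left hand side as $A(x_2 - x_1)/(-x_1 x_2)$ (note $-x_1 x_2 > 0$), dividing through by the positive quantity $x_2 - x_1$, and rearranging yields
\[
|x_1|\,|x_2| \;=\; -x_1 x_2 \;\geq\; A/B.
\]

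Finally, I would apply the AM--GM inequality to the two positive numbers $|x_1|$ and $|x_2|$: since $x_2 - x_1 = |x_1| + |x_2| \geq 2\sqrt{|x_1|\,|x_2|}$, the previous step gives
\[
|x_2 - x_1| \;\geq\; 2\sqrt{A/B},
\]
which is exactly the claimed $\gg \sqrt{A/B}$ bound with explicit constant $2$.

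There is no real obstacle in this argument — the only potentially delicate point is the sign bookkeeping when clearing denominators (the product $x_1 x_2$ is negative, and one must track this carefully so that the direction of the inequality is preserved). In the intended application of the lemma, $f$ will play the role of the truncated Green-type sum from Lemma~\ref{lem:truncation} (recentered at an old eigenvalue), $A$ will encode the size of the dominant pole term driving the $1/x$ behavior, and $B$ will bound the derivative of the regularized sum; the main analytic work of the paper lies in producing those two estimates, while the present lemma is purely an elementary separation tool.
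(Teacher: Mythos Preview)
Your proof is correct. The paper itself does not actually supply a proof of this lemma; it only states the result and cites it as Lemma~7 of \cite{3dscars}, so there is no in-paper argument to compare against. Your argument --- using $f'\le B$ via the mean value theorem to obtain $|x_1||x_2|\ge A/B$ and then AM--GM to conclude $|x_2-x_1|\ge 2\sqrt{A/B}$ --- is exactly the natural elementary route and yields the explicit constant $2$.
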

%
%
%

Given $m \in \oldspec$, set $\lambda = m+\delta$, and define
$$
G_{m}(\delta) :=
\sum_{n : 0 < |n-m| \leq \sqrt{m}}
r_{3}(n) \left( 
\frac{1}{n-m-\delta} 
- \frac{n}{n^{2}+1}
\right);
$$
we can then rewrite the spectral equation as
$$
G_{m}(\delta) -1/\nu =
\frac{r_{3}(m) }{\delta} 
$$

For $|\delta| < 1/2$, we find (note that all terms are positive)
$$
0 <
G_{m}'(\delta) \ll
\sum_{n : 0 < |n-m| \leq \sqrt{m}}
\frac{r_{3}(n) }{(n-m)^{2}}
=
\sum_{0 < |h| < \sqrt{m}}
\frac{r_{3}(m+h)}{h^{2}}
$$

To apply Lemma~\ref{lem:spectral-separation}, we will need to bound
$G_{m}'$ from above.
\begin{lem}
\label{lem:B-upper-bound}
Given $k \in \Z^+$, we have
$$
\frac{1}{x}
\sum_{n \leq x} 
\left(
\sum_{0 < |h| < n^{\alpha}}
\frac{r_{3}(n+h)}{h^{2} \sqrt{n}}
\right)^{k}
\ll_{k} x
$$
and consequently 
$$
|\{ 
n \leq x : \sum_{0 < |h| < n^{\alpha}} \frac{r_{3}(n+h)}{  h^{2} \sqrt{n}} > T
\}|
\ll_{k}
x/T^{k}
$$
\end{lem}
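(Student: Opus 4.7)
The plan is to prove the moment bound $\sum_{n \leq x} S_n^k \ll_k x$, where $S_n := \sum_{0 < |h| < n^\alpha} r_3(n+h)/(h^2\sqrt{n})$, and to deduce the concentration statement by Chebyshev's inequality, $|\{n \leq x : S_n > T\}| \leq T^{-k} \sum_{n \leq x} S_n^k \ll_k x/T^k$.

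The first step is a pointwise bound obtained from H\"older's inequality with conjugate exponents $k$ and $k/(k-1)$: splitting $r_3(n+h)/(h^2\sqrt{n}) = [r_3(n+h)/(h^{2/k}\sqrt{n})] \cdot [1/h^{2(k-1)/k}]$ and using that $\sum_{h \neq 0} 1/h^2 = 2\zeta(2)$ is finite yields
\[
S_n^k \ll_k \sum_{0 < |h| < n^\alpha} \frac{r_3(n+h)^k}{h^2\, n^{k/2}}.
\]

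The second step is to sum this over $n \leq x$ and interchange the order of summation. For each fixed $h$ with $0 < |h| < x^\alpha$, substituting $m = n+h$ in the inner sum and using $|h| < n^\alpha$ with $\alpha < 1$ (so that $n \asymp m$ uniformly), I obtain
\[
\sum_{n \leq x} \frac{r_3(n+h)^k}{n^{k/2}} \ll \sum_{m \leq 2x} \left(\frac{r_3(m)}{\sqrt{m}}\right)^k \ll_k x
\]
by Proposition~\ref{prop:k-th-moment-bound}. Summing the remaining factor $1/h^2$ over $h$ produces the moment bound $\sum_{n \leq x} S_n^k \ll_k x$, and then Chebyshev finishes the argument.

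This is essentially routine; the point is that the weight $1/h^2$ is absolutely summable, which lets H\"older decouple $r_3$ from the shift $n+h$ and reduce everything to the moment estimate already established. No genuine obstacle remains, since all the substantive analytic content has been packaged into Proposition~\ref{prop:k-th-moment-bound} (and thereby into Barban's bound on moments of $L(1,\chi_{-d})$).
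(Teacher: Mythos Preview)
Your proof is correct and follows essentially the same strategy as the paper: both reduce to Proposition~\ref{prop:k-th-moment-bound} via H\"older's inequality and the absolute summability of $\sum_{h\neq 0}1/h^{2}$. The only cosmetic difference is that you apply H\"older pointwise to the $h$-sum before summing over $n$, whereas the paper first expands the $k$-th power into a sum over tuples $(h_{1},\ldots,h_{k})$ and then applies H\"older to the $n$-sum for each fixed tuple; the two routes are equivalent.
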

\begin{proof}
  Expanding out the $k$-th power expression it is enough to show that
  for $0 <|h_{1}|,|h_{2}|, \ldots, |h_{k}|< x^{\alpha}$ we have
$$
\frac{1}{x}
\sum_{n \leq x } \frac{r_{3}(n+h_{1}) \cdots r_{3}(n+h_{k})}{n^{k/2}} \ll_{k} x
$$ 
and this follows from H\"older's inequality and the bound on
$\sum_{n \leq x} (r_{3}(n)/\sqrt{n})^{k}$ given by
Proposition~\ref{prop:k-th-moment-bound}.
\end{proof}

In particular, for $k \in \Z^+$, we have
\begin{equation}
  \label{eq:Gm-bound}
|\{ m \leq x : G_{m}'(\delta)/\sqrt{m}  > T
\text{ for $|\delta| \leq 1/2$} \}| \ll_{k} x/T^{k} 
\end{equation}

Now, if $s_{m} = \lambda_{m_{+}}-\lambda_{m}$ denotes the distances
between two consecutive new eigenvalues near $m \in \oldspec$, we have
one of the following: either one or both new eigenvalues lies outside
$[m-1/2, m+1/2]$ in which case $s_{m} \geq 1/2$.  In case both
eigenvalues lie in $[m-1/2, m+1/2]$, 
Lemma~\ref{lem:spectral-separation}
gives that
\begin{equation}
  \label{eq:distance-Am-Bm}
s_{m} \gg \sqrt{r_{3}(m)/G_{m}'(0)} = \sqrt{A(m)/B(m)}.
\end{equation}
where we define
$$
A = A(m) := r_{3}(m)/\sqrt{m}, \quad B = B(m) := G_{m}'(0)/\sqrt{m}
$$

\begin{prop}
For any $\gamma>0$ we have, as $x \to \infty$,
$$
|\{ n \leq x : 0< r_{3}(n)/G'_{n}(0) \leq \epsilon^{2} \}|
\ll_{\gamma}
\epsilon^{4-\gamma} \cdot x
$$  
\end{prop}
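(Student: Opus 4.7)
The plan is to exploit the arithmetic decomposition $n = 4^{l}n_{0} = 4^{l}c^{2}d$ (with $4 \nmid n_{0}$, $d$ squarefree) to split the set $\{A(n)/B(n) \leq \epsilon^{2}\}$ into pieces where either the $L$-function lower bounds from Section~\ref{sec:sums-three-squares} or the moment bound on $B$ from Lemma~\ref{lem:B-upper-bound} is decisive. Throughout let $A := r_{3}(n)/\sqrt{n}$ and $B := G_{n}'(0)/\sqrt{n}$, and fix a cutoff $C_{0} := \epsilon^{-(4-\gamma)}$. The contribution from $n \leq x$ with $c > C_{0}$ is estimated trivially by $\sum_{l}\sum_{c > C_{0}} x/(4^{l}c^{2}) \ll x/C_{0} = x\epsilon^{4-\gamma}$, already of the target size, so I may restrict to $c \leq C_{0}$.

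For a parameter $T_{1}>0$, use $\{A/B \leq \epsilon^{2}\} \subseteq \{B > T_{1}\} \cup \{A \leq \epsilon^{2}T_{1}\}$: the first piece has size $\ll_{k} x/T_{1}^{k}$ by (\ref{eq:Gm-bound}). For the second, $r_{3}(n_{0})>0$ with $4 \nmid n_{0}$ forces $n_{0} \not\equiv 0,4,7 \pmod{8}$, so $\mu_{n_{0}} \geq 16$ and $r_{3}(n_{0}) \geq R_{3}(n_{0}) \gg \sqrt{n_{0}}\,L(1,\chi_{-4n_{0}})$. Hence $A(n) \gg L(1,\chi_{-4n_{0}})/2^{l}$, and $A \leq \epsilon^{2}T_{1}$ forces $L(1,\chi_{-4n_{0}}) \leq 1/T$ with $T \asymp 1/(2^{l}\epsilon^{2}T_{1})$. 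Proposition~\ref{prop:L-one-lower-bound} with $C = C_{0}$ gives, for each $l$, a count $\ll (x/4^{l})\exp(-(c_{1}/(2^{l}\eta))^{4})$, where $\eta := \epsilon^{2}T_{1}L_{0}$ and $L_{0} := \log\log C_{0} = O(\log\log(1/\epsilon))$. Summing over $l$, the key estimate is
\[
\sum_{l \geq 0} 4^{-l}\exp\!\left(-\left(\tfrac{c_{1}}{2^{l}\eta}\right)^{\!4}\right) \ll \eta^{2},
\]
shown by splitting at $l^{*} := \lceil\log_{2}(1/\eta)\rceil$: for $l \geq l^{*}$ the exponential is $O(1)$ while $\sum_{l \geq l^{*}} 4^{-l} \asymp \eta^{2}$, and for $l < l^{*}$ the super-polynomial exponential decay in $1/(2^{l}\eta)$ makes that contribution comparably small.

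Combining,
\[
|\{n \leq x : A/B \leq \epsilon^{2}\}| \ll_{k} x\epsilon^{4-\gamma} + x/T_{1}^{k} + x\epsilon^{4}T_{1}^{2}L_{0}^{2}.
\]
Balancing at $T_{1} = (\epsilon^{4}L_{0}^{2})^{-1/(k+2)}$ yields $\ll x\epsilon^{4-8/(k+2)}L_{0}^{2k/(k+2)}$; choosing $k$ with $8/(k+2) \leq \gamma/2$ and absorbing the sub-polynomial factor $(\log\log(1/\epsilon))^{O(1)}$ into $\epsilon^{-\gamma/2}$ gives the desired bound $O_{\gamma}(x\epsilon^{4-\gamma})$. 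The main obstacle is the $\eta^{2}$ sum estimate above: a crude bound would only give $O(\eta)$ and a final exponent of $2-\gamma$ instead of $4-\gamma$. The extra power of $\eta$ precisely reflects the paper's heuristic that small gaps are driven by $n$ divisible by large powers of $4$, since the dominant contribution to the $l$-sum comes from $l \asymp \log_{2}(1/\eta)$, i.e., from $n$ with large $4$-adic valuation.
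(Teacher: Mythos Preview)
Your proof is correct and rests on the same three ingredients as the paper's argument: the decomposition $n=4^{l}c^{2}d$ together with the lower bound $A(n)\gg 2^{-l}L(1,\chi_{-4n_{0}})$, Proposition~\ref{prop:L-one-lower-bound} for the rarity of small $L$-values, and the moment bound (\ref{eq:Gm-bound}) for the rarity of large $B$. The organization, however, is genuinely different. The paper passes to logarithmic variables $a,b,l,m$ (with $L(1,\chi_{-4n_{0}})=2^{-a}$, $B(n)=2^{b}$, $\epsilon=2^{-m}$) and performs a three-case split: large $a$; small $a$ with $l<2m(1-\gamma)-\sqrt{m}$ forcing $b$ large; and large $l$. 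You instead introduce a single threshold $T_{1}$, split into $\{B>T_{1}\}\cup\{A\leq\epsilon^{2}T_{1}\}$, absorb the entire $l$-sum for the second piece into the estimate $\sum_{l\geq 0}4^{-l}\exp(-(c_{1}/(2^{l}\eta))^{4})\ll\eta^{2}$, and optimize over $T_{1}$ at the end. Your $\eta^{2}$ lemma packages both the paper's ``large $a$'' case (the super-exponential decay for $l<l^{*}$) and its ``large $l$'' case (the tail $l\geq l^{*}$) into a single line, while $\{B>T_{1}\}$ corresponds to the paper's ``large $b$'' case. The two arguments are thus rearrangements of one another; yours is somewhat more streamlined, while the paper's case analysis makes the three competing regimes more visible.
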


\begin{proof}
Given $n$ such that $r_{3}(n) > 0$, write $n=4^{l} n_{0} =4^{l}
 c^{2} d$, where $4 \nmid n_{0}$ and $d$ is squarefree. 
Recalling that $G_{n}'(0) = \sqrt{n} B(n)$ and 
$$
r_{3}(n) \geq R_{3}(n_{0}) \gg \frac{\sqrt{n}}{2^{l}}
L(1, \chi_{-4n_{0}})
$$
it is enough to estimate the number of $n = 4^{l} n_{0} \leq x$, for
which
\begin{equation}
  \label{eq:eqs-L-l-B}
\epsilon^{2} \geq r_{3}(n)/G_{n}'(0) \gg
\frac{L(1,\chi_{-4n_0})}{2^{l} B(4^{l}n_{0})} 
\end{equation}

The number of $n \leq x$ such that $n =
4^{l} c^{2} d$ for $c \geq \epsilon^{-4}$ is $\ll x/\epsilon^{-4} =
\epsilon^{4} x$.  We may thus assume that $c\leq \epsilon^{-4}$
in any such decomposition.

Now, given $n, \epsilon$, define $a=a(n),b=b(n)$ and $m = m(\epsilon)$
such that
$$
L(1,\chi_{-4n_{0}}) = 2^{-a}
\quad
G_{n}'(0) = 2^{b},
\quad
\epsilon = 2^{-m}.
$$
Moreover, \eqref{eq:eqs-L-l-B} is equivalent to
$
2^{-2m} \gg \frac{2^{-a}}{2^{l} 2^{b}}
$
i.e.,
$$
a+b+l \geq 2m + O(1).
$$

{\em First case: $a > \sqrt{m}$}.  Noting that $c \leq 1/\epsilon^{4}
= 2^{4m}$ implies that $\log \log c \leq  \log 4m$,
Proposition~\ref{prop:L-one-lower-bound} gives that the number of
$n \leq x$ such that $ L(1,\chi_{-4_{n_{0}}}) = 2^{-a}  \leq
2^{-\sqrt{m}}$ is
\begin{multline*}
\ll 
x \exp(-(2^{\sqrt{m}}/( \log 4m))^{4})
 \ll x \exp( -1000m) = o(\epsilon^{10} x)
\end{multline*}

{\em Second case: $ a \leq \sqrt{m}$}.  Given a large positive integer
$k \asymp 8/\gamma$, we first consider $l$ such that $l < 2m(1-\gamma)
- \sqrt{m}$. 
We then find that
$$
b \geq 2m + O(1) - a - l \geq 2m +O(1) - \sqrt{m} - 2m(1-\gamma) +
\sqrt{m} \geq 2m \gamma +O(1) 
$$
for all sufficiently large $m$.  Using the bound in
\eqref{eq:Gm-bound} (recall that $k \asymp 8/\gamma$ and  that $B(n) =
G_{n}'(0)/\sqrt{n}$),
the number of $n \leq x$ such that
$B(n) \geq 2^{\gamma m}$ is
$$
\ll_{\gamma} x/(2^{\gamma m} )^{8/\gamma} \ll_{\gamma}
 x/(2^{8m}) = O_{\gamma}( \epsilon^{8} x)
$$

Finally, the number of $n \leq x$ such that $4^{l}|n$ for some $l 
\geq 2m(1-\gamma) - \sqrt{m}$ is 
$$
\ll x/2^{2 (2m(1-\gamma) - \sqrt{m})} \ll x \epsilon^{4(1-\gamma) +o(1)}
$$
as $m \to \infty$ (or equivalently, that $\epsilon = 2^{-m}\to 0$), thereby 
concluding the proof.
\end{proof}

\appendix
\section{Spectral cutoff justification}
\label{sec:appendix}

Before proving Lemma~\ref{lem:truncation} we record complete
cancellation in a certain smooth approximation to the spectral
counting function.

\begin{lem}
\label{lem:smooth-truncation}
With $P.V. \int_{0}^{\infty} \ldots \, dt= 
\lim_{\epsilon \to 0} ( \int_{0}^{\lambda-\epsilon} \ldots +
\int_{\lambda+\epsilon}^{\infty} \ldots ) \, dt$
denoting the principal value of the integral (with respect to the
singularity at $t = \lambda$), we have
$$
P.V. \int_{0}^{\infty} 
\sqrt{t} 
\left( \frac{1}{t-\lambda} - \frac{1}{t} \right) \, dt = 0.
$$  
\end{lem}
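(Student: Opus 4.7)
The plan is to compute the integral directly by combining terms, substituting $u=\sqrt{t}$, and using partial fractions, with a common upper cutoff to manage the non-absolute convergence of each individual piece.

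First, I would combine the two fractions inside the integrand via
$$\frac{1}{t-\lambda} - \frac{1}{t} = \frac{\lambda}{t(t-\lambda)},$$
so that the integrand becomes $\lambda/(\sqrt{t}(t-\lambda))$. This makes it manifest that the integrand is $O(t^{-3/2})$ at infinity and $O(t^{-1/2})$ near $t=0$ (both integrable), and the subtraction in the original expression precisely cancels the non-integrable tail at infinity of $\sqrt{t}/(t-\lambda)$ and $\sqrt{t}/t$ separately. Thus the only genuine issue is the simple pole at $t=\lambda$, which the principal value handles.

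Next, I would substitute $u=\sqrt{t}$, so that $dt = 2u\,du$ and $\sqrt{t}\cdot \lambda/(t(t-\lambda)) \cdot dt$ reduces, after simplification, to $2\lambda\,du/(u^2-\lambda)$. The integral becomes
$$2\lambda\cdot P.V.\int_0^\infty \frac{du}{u^2-\lambda},$$
with the principal value now at $u=\sqrt{\lambda}$. Using partial fractions
$$\frac{1}{u^2-\lambda} = \frac{1}{2\sqrt{\lambda}}\left(\frac{1}{u-\sqrt{\lambda}} - \frac{1}{u+\sqrt{\lambda}}\right),$$
each of the two resulting pieces, integrated against a common cutoff $u\le R$, yields a logarithm, and the combination evaluates to
$$\ln\frac{R-\sqrt{\lambda}}{\sqrt{\lambda}} - \ln\frac{R+\sqrt{\lambda}}{\sqrt{\lambda}} = \ln\frac{R-\sqrt{\lambda}}{R+\sqrt{\lambda}},$$
which tends to $0$ as $R\to\infty$. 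Hence the original integral vanishes.

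The main care needed is to keep the two logarithmically divergent pieces paired together under a single upper cutoff $R$, removing it only at the very end; splitting the (absolutely convergent) integral in $u$ into two separately divergent pieces and treating them independently would give nonsense. Apart from this bookkeeping point, the proof is an elementary direct calculation, and the role of the principal value prescription is precisely to cancel the symmetric singularities at $u=\pm\sqrt{\lambda}$ before the limit is taken.
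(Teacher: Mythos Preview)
Your proof is correct and follows essentially the same route as the paper: the paper substitutes $t=\lambda s^{2}$ (a rescaled version of your $u=\sqrt{t}$) to reduce the problem to $P.V.\int_{0}^{\infty}\frac{ds}{1-s^{2}}=0$, which is exactly your partial-fraction/logarithm computation after the rescaling $u=\sqrt{\lambda}\,s$. Your version is in fact a bit more explicit about the logarithmic cancellation at the upper cutoff $R$; one small remark is that the transported principal-value window in $u$ is not exactly symmetric about $\sqrt{\lambda}$, but the asymmetry is $O(\epsilon^{2})$ and contributes $O(\epsilon)$ to the integral, so this is harmless (the paper glosses over the same point).
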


\begin{proof}
  Using the change of variables $t = \lambda s^{2}$ 
  (taking apropriate principle values of the integrals, in particular
  for the integrals over the $s$-parameter, the principal value is
  taken with respect to the singularity at $s=1$) we find that
$$
\int_{0}^{\infty} 
\left(
  \frac{1}{\lambda-t} + \frac{1}{t}
\right) \sqrt{t} \, dt 
=
\int_{0}^{\infty} 
\left(
  \frac{1}{\lambda-\lambda s^{2}} + \frac{1}{\lambda s^{2}}
\right) \sqrt{\lambda s^{2}}  2 \lambda s \, ds
$$
$$
= 2 \sqrt{\lambda}
\int_{0}^{\infty} 
\left(
  \frac{1}{1- s^{2}} + \frac{1}{ s^{2}}
\right) s^{2} \, ds
=
2 \sqrt{\lambda}
\int_{0}^{\infty} 
  \frac{2}{1- s^{2}} 
\, ds  = 0.
$$
\end{proof}

\begin{proof}[Proof of Lemma~\ref{lem:truncation}]
%
%
%
%
%
As $n/(n^{2}+1)-1/n = O(1/n^{3})$, we find that
$$
\sum_{n = 1}^{\infty} r_{3}(n) 
\left(
\frac{1}{n-\lambda}   - \frac{n}{n^{2}+1}
\right)
= 
\sum_{n = 1}^{\infty} r_{3}(n) g(n) + O(1)
$$
if we define
$$
g(n) := \frac{1}{n-\lambda} - \frac{1}{n}
$$
We now study sums 
$$
\sum_{n \in I} r_{3}(n) g(n)
$$
for various intervals $I = (A,B]$.   More precisely,
define $I_{1} := (1,\lambda - \lambda^{\alpha}]$,
$I_{2} = (\lambda - \lambda^{\alpha}, \lambda+\lambda^{\alpha}]$,
and let $I_{3} := (\lambda+\lambda^{\alpha}, \infty)$, for $\alpha \in
(0,1)$.  It is our 
goal to show that for $k=1,3$, we have
$$
\sum_{n \in I_{k}}  r_{3}(n) g(n) \ll \lambda^{1/2 - \delta}
$$
for some small $\delta>0$, if $\alpha$ is chosen appropriately.

With
$$
R(t) := \sum_{n \le t } r_{3}(n)
$$
we have (cf.  \eqref{eq:heath-brown-bound})
$$
R(t) = \frac{4 \pi}{3} t^{3/2} + O(t^{21/32})
=: M(t) + E(t).
$$

By Abel's summation formula,
$$
\sum_{ A < n \le B}
r_{3}(n) g(n)
=
R(B)g(b) - R(A)g(A) 
-
\int_{A}^{B} R(t) g'(t) \, dt,
$$
and the contribution from the smooth main term $M(t)$ is then
$$
=
M(B)g(b) - M(A)g(A) 
-
\int_{A}^{B} M(t) g'(t) \, dt
$$
which, on using partial integration, equals (after taking appropriate
principal values)
  \begin{multline*}
M(B)g(b) - M(A)g(A) 
-
\left( M(B)g(b) - M(A)g(A) 
+
\int_{A}^{B} m(t) g(t) \, dt \right)
\\ =
\int_{A}^{B} m(t) g(t) \, dt 
  \end{multline*}
where $m(t) = M'(t) = c \sqrt{t}$ for some $c>0$.

Thus, taking principal values as appropriate,   and using
Lemma~\ref{lem:smooth-truncation} 
$$
\int_{I_{1} \cup I_{3}}
m(t) g(t) \, dt  
=
- \int_{0}^{1} m(t) g(t) \, dt - \int_{I_{2}} m(t) g(t) \, dt  =
O(1) - \int_{I_{2}} m(t) g(t) \, dt  
$$
and it is enough to bound 
\begin{equation}
  \label{eq:PV-bound}
\int_{I_{2}} 
\sqrt{t}
\left( \frac{1}{t-\lambda} - \frac{1}{t} \right) \, dt 
=
\int_{-\lambda^{\alpha}}^{\lambda^{\alpha}}
\sqrt{\lambda+s}
\left( \frac{1}{s} - \frac{1}{\lambda+s} \right) \, ds.
\end{equation}
Since $\sqrt{\lambda+s} = \sqrt{\lambda}\cdot( 1 + O(s/\lambda)$)
we find that (\ref{eq:PV-bound}) equals
$$
\int_{-\lambda^{\alpha}}^{\lambda^{\alpha}} \frac{\lambda^{1/2}}{s} \,
ds
+
\int_{-\lambda^{\alpha}}^{\lambda^{\alpha}}
\frac{\lambda^{1/2}s}{\lambda s} \, ds
+ O(\lambda^{\alpha-1/2}) = 
0+  O(\lambda^{\alpha-1/2})  = 
 O(\lambda^{\alpha-1/2}).
$$

Further, the contribution from the error term $E(t)$ equals
\begin{multline*}
E(B)g(B) - E(A)g(A) 
-
\int_{A}^{B} E(t) g'(t) \, dt
\\=
B^{21/32} \left( \frac{1}{B-\lambda} - \frac{1}{B}  \right)
-
\left(A^{21/32} \left( \frac{1}{A-\lambda} - \frac{1}{A}  \right)
\right)
\\
+O 
\left(
\int_{A}^{B} t^{21/32}
\left( -\frac{1}{(t-\lambda)^{2}} + \frac{1}{t^{2}}  \right) \, dt
\right).
\end{multline*}
For the integral over the first interval $I_{1}$ we have $A=1$,
$B=\lambda-\lambda^{\alpha}$ and thus
$$
\int_{I_{1}}
\ll
\lambda^{21/32-\alpha} 
+ O(1) +
\lambda^{21/32-\alpha} +
O(1)
\ll \lambda^{21/32-\alpha} 
$$
(to see that
$\int_{1}^{\lambda-\lambda^{\alpha{}}} 
\frac{t^{21/32} }{(t-\lambda)^{2}}  dt \ll \lambda^{21/32-\alpha}$,
consider $\int_1^{\lambda/2}$ and
$\int_{\lambda/2}^{\lambda-\lambda^{\alpha}}$ separately.)
Similarly, the contribution to $\int_{I_{3}}$, from terms involving
$E(t)$, is
$$
\ll
\lambda^{21/32-\alpha}
$$
(to see that
$\int_{\lambda+\lambda^{\alpha{}}}^{\infty} \frac{t^{21/32}
}{(t-\lambda)^{2}} dt \ll \lambda^{21/32-\alpha}$,
consider $\int_{\lambda+\lambda^{\alpha{}}}^{2\lambda}$ and
$\int_{2\lambda}^{\infty}$ separately.)

In conclusion, we find that
$$
\sum_{ n \in I_{1} \cup I_{2}}
r_{3}(n) g(n)
\ll \lambda^{21/32-\alpha} = \lambda^{1/2-\delta}
$$
for $\delta>0$ if we take $\alpha = 1/6$ (note that $21/32 < 2/3$.)

\end{proof}

\bibliographystyle{abbrv} 
\bibliography{mybib,mypapers}

\end{document}